\documentclass[11pt]{article}
\usepackage{latexsym,amsmath,amsfonts,amsthm}

\newtheorem{theorem}[]{Theorem}

\theoremstyle{definition}

\newtheorem{problem}[]{Problem}

\newtheorem{proposition}[]{Proposition}
\newtheorem{edgerule}[]{Rule}

\begin{document}

\begin{center}
{\LARGE\textbf{A Note on the Hardness of Graph Diameter Augmentation Problems}}\\
\bigskip
\bigskip
James Nastos\footnote{email: \texttt{jnastos@interchange.ubc.ca}}, Yong Gao\footnote{email: \texttt{yong.gao@ubc.ca}}\\
\smallskip
\small\emph{Department of Computer Science, Irving K. Barber School of Arts and Sciences.}
\smallskip
\small\emph{University of British Columbia Okanagan, Kelowna, Canada V1V 1V7}
\end{center}

\begin{abstract}
A graph has \emph{diameter} $D$ if every pair of vertices are connected by a path of at most $D$ edges. The Diameter-$D$ Augmentation problem asks how to add the a number of edges to a graph in order to make the resulting graph have diameter $D$. It was previously known that this problem is NP-hard \cite{GJ}, even in the $D=2$ case. In this note, we give a simpler reduction to arrive at this fact and show that this problem is W[2]-hard.
\bigskip

\smallskip
\textbf{Keywords:} Graph augmentation, graph diameter, algorithms, fixed-parameter tractability, W[2]-hard, domination, reduction

\end{abstract}

\bigskip
\normalsize

\bigskip
\normalsize

\section{Introduction}

\noindent
A graph $G$ has \emph{diameter} $D$ if every pair of vertices are connected by a path of at most $D$. The {\sc Graph Diameter-D Augmentation} problem takes as input a graph $G=(V,E)$ and a value $k$ and asks whether there exists a set $E_2$ of new edges so that the graph $G_2 = (V, E \cup E_2)$ has diameter $D$. This problem was known to be NP-hard for $D \geq 3$ \cite{SBL} and was later shown to remain hard for the $D=2$ case \cite{LMS}. The proof in \cite{LMS} reduced a restricted (but still NP-hard \cite{GJ}) {\sc 3-Sat} problem to a relaxed dominating set problem (which they called {\sc Semi-Dominating Set}) which was then reduced to {\sc Diameter-2 Augmentation}. In this note, we provide a reduction to {\sc Diameter-2 Augmentation} directly from {\sc Dominating Set}, which not only provides a cleaner proof of NP-hardness but also establishes that {\sc Diameter-2 Augmentation} is W[2]-hard.

\bigskip

\noindent
An algorithm is called \emph{fixed-parameter tractable} (or FPT) if its runtime is $O(f(k)n^c)$ where $n$ is the input size, $f$ is a function of $k$ which does not depend on $n$ and $c$ is a constant. When the value $k$ is fixed, this is essentially a polynomial runtime, and in particular for any fixed $k$ it is the same polynomial (up to coefficients.) FPT algorithms have received much attention lately as many NP-hard problems have been shown to be fixed-parameter tractable. For instance, the {\sc Vertex Cover} problem has an algorithm (\cite{CKX}) running in $O(1.2738^k + kn)$ which is linear in $n$ for any fixed $k$. Analogous to the idea of NP-hardness, there is a measure of hardness for parameterized problems which depend on parameterized reductions. Some well-known parameterized-hard problems are {\sc Clique} (which is W[1]-hard) and {\sc Dominating Set} (which is W[2]-hard). These results and a thorough introduction to parameterized problems can be found in \cite{N}. Being parameterized-hard also has implications for the approximatibility of the problem: namely, a problem which is W[1]-hard is unlikely to have an efficient polynomial-time approximation scheme (EPTAS) \cite{M}.

\bigskip

\noindent

\subsection{The Reduction}

We proceed with a reduction from the parameterized dominating set problem to the parameterized diameter-2 augmentation problem after a formal description of each of these problems and of what constitutes a parameterized reduction. In this report, we consider input graphs which are connected.

\begin{problem}{\sc Dominating Set}
\\{\sc Input:} A graph $G=(V,E)$ and a positive integer $k$.
\\{\sc Task:} To determine if there exists a set $S \subseteq V$ of size at most $k$ such that for every $v \in V\setminus S$ there is some $s \in S$ where $\{s, v\}$ is an edge.
\end{problem}

\begin{problem}{\sc Diameter-2 Augmentation}
\\{\sc Input:} A graph $G=(V,E)$ and a positive integer $k$.
\\{\sc Task:} To determine if there exists a set of at most $k$ edges that can be added to $G$ so that the resulting graph has diameter 2.
\end{problem}

We must reduce {\sc Dominating Set} to {\sc Diameter-2 Augmentation} via a \emph{parameterized reduction}. That is, we must give a mapping that sends a yes-instance $(G_1,k_1)$ of {\sc Dominating Set} to a yes-instance $(G_2,k_2)$ of {\sc Diameter-2 Augmentation} where $k_2$ depends on $k_1$ alone. We will provide a mapping here where $k_2 = k_1$.

Let $(G_1,k_1)$ be an instance of {\sc Dominating Set}, where $G_1 = (V_1,E_1)$. We construct a graph $G_2$ with two copies of $G_1$ called $U_1$ and $U_2$. Any two vertices $u_1 \in U_1$ and $u_2 \in U_2$ that correspond to the same vertex $v \in V_1$ will be called \emph{twins}. For each vertex $w$ in $U_1$, join an edge between $w$ and its twin in $U_2$. Let $w_i$ and $w_j$ be any two distinct vertices in $U_1 \cup U_2$. In $G_2$, create a new set $Y$ of vertices $y(w_1,w_2)$ such that $Y$ induces a complete graph and each vertex $y(w_1,w_2)$ is adjacent to $w_1$ and to $w_2$. Finally, we create in $G_2$ a vertex $z$ adjacent to every vertex of $Y$ and adjacent to no vertex in $U_1 \cup U_2$, and create a vertex $x$ adjacent to $z$ alone.

Note that $G_2$ has diameter at most 3. Every pair of vertices in $G_2$ which is not connected by a 2-path must be $x$ with some $w_i \in U_1 \cup U2$. It is easy to see that if a dominating set $D$ of $G_1$ contained $k$ vertices, then the set of edges $\{x, d\}, d\in D$ forms a diameter-2 augmenting set (also of size $k$) for $G_2$. We now prove the converse.

\begin{theorem} $G_1$ has a dominating set of size $k$ if and only if $G_2 = (V_2,E_2)$ has an augmenting set of edges $S$ such that $H = (V_2, E_2 \cup S)$ has diameter 2.
\end{theorem}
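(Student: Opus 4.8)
The plan is to prove the nontrivial direction: given a diameter-2 augmenting set $S$ for $G_2$, I must extract a dominating set of size at most $k$ for $G_1$. The forward direction is already established in the text (a dominating set $D$ yields the augmenting set $\{x,d\} : d \in D$), so I focus on the converse. The key structural observation driving the whole argument is the one noted just before the theorem: in $G_2$ the only pairs of vertices not already joined by a path of length $2$ are the pairs $(x, w)$ for $w \in U_1 \cup U_2$. Hence any valid augmenting set $S$ must, for every such $w$, create a path of length at most $2$ from $x$ to $w$ in $H$.

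First I would analyze what a length-$2$ path from $x$ to a vertex $w \in U_1 \cup U_2$ can look like in $H = (V_2, E_2 \cup S)$. Since $x$ has only the single original neighbor $z$, and $z$ is adjacent to no vertex of $U_1 \cup U_2$, no such path exists before augmentation; the augmenting edges in $S$ must supply it. I would argue that without loss of generality every useful augmenting edge can be taken to be incident to $x$: any diameter-2 augmenting set can be ``normalized'' so that $S = \{ \{x, d\} : d \in D \}$ for some vertex set $D \subseteq V_2$, because the only deficiency to repair is at $x$, and the cheapest way to put $w$ within distance $2$ of $x$ is to add an edge from $x$ to some vertex adjacent to $w$ (or to $w$ itself). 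This normalization step is the part I expect to require the most care, since in principle $S$ could add edges elsewhere (e.g. between two vertices of $U_1 \cup U_2$, or into $Y$) that indirectly shorten the $x$-to-$w$ distance; I must verify that any such edge can be replaced by an edge incident to $x$ without increasing $|S|$ or losing the covering property.

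Once $S$ is normalized to a star $\{ \{x,d\} : d \in D\}$ with $|D| \le k$, the condition that $x$ reaches every $w \in U_1 \cup U_2$ within distance $2$ translates into: every $w \in U_1 \cup U_2$ is adjacent in $G_2$ to some $d \in D$ (or equals such a $d$). Here I would use the structure of $G_2$: the neighbors of a vertex $w_1 \in U_1$ inside $U_1 \cup U_2$ are precisely its $G_1$-neighbors in $U_1$ together with its twin in $U_2$ (the $Y$-vertices and $z$ are not in $U_1 \cup U_2$ and adding $x$-edges to them does not help cover the $U$-vertices). The final step is to push $D$ down to a dominating set of $G_1$: I would map each $d \in D$ to a vertex of $V_1$ (identifying twins with their common original vertex, and discarding any $d \in Y \cup \{z\}$ as useless), obtaining a set $D' \subseteq V_1$ with $|D'| \le |D| \le k$, and then check that $D'$ dominates $G_1$ by tracing the covering of both copies $U_1$ and $U_2$ back through the twin edges and the copied edges of $E_1$. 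Concluding that $D'$ is a dominating set of $G_1$ of size at most $k$ completes the reduction and hence the theorem.
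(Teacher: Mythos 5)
Your overall strategy---normalize the augmenting set into a star centered at $x$, then read off a dominating set---is the same shape as the paper's proof (its ``proper'' augmenting sets and its edge-swap rules play the role of your normalization, and its Rule 7 is your twin-projection). But there is a genuine error at the heart of your plan: the claim that edges from $x$ to $Y$-vertices ``do not help cover the $U$-vertices'' and can simply be discarded as useless. This is false. The vertex $y(a,b)$ is by construction adjacent to both $a$ and $b$, so the single augmenting edge $\{x,y(a,b)\}$ creates the 2-paths $x,\,y(a,b),\,a$ and $x,\,y(a,b),\,b$; a valid augmenting set may cover some $U$-vertices \emph{only} through such edges. If you delete those $d \in Y$ from $D$, the resulting $D'$ need not dominate $G_1$, and your final ``check that $D'$ dominates'' step fails.

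Eliminating the $x$-to-$Y$ edges is precisely the hard part of the theorem, and it is not a one-for-one swap on its face: $\{x,y(a,b)\}$ covers the two vertices $a$ and $b$, so the naive replacement by edges into $U_1\cup U_2$ costs two edges ($\{x,a\}$ and $\{x,b\}$) and blows the budget $k$. The paper devotes Rules 2--6 and Proposition 1 to exactly this issue: a case analysis showing that whenever such an edge is present, some adjacency ($a$ adjacent to $b$, or to a vertex of $U_x$, or two $Y$-edges interacting) permits a swap that does not increase $|S|$, and then---crucially exploiting the twin construction, i.e.\ that every $u$ has an adjacent twin $u^t$ which must itself be brought within distance 2 of $x$---that once no rule applies, the set $U^-$ of vertices covered only via $Y$ is empty. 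This is the reason the construction uses two copies of $G_1$ at all; a correct proof cannot avoid engaging with it. Your proposal also leaves the normalization step (edges of $S$ not incident to $x$, e.g.\ inside $U_1\cup U_2$ or at $z$) as an acknowledged to-do, which the paper handles with Rule 1 and implicitly in Proposition 1; that gap is more forgivable, but combined with the incorrect dismissal of the $Y$-edges, the proposal as written does not establish the converse direction.
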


\begin{proof} Given a $k$-augmenting set of $G_2$, we will construct a dominating set $D$ of $G_1$ also of size $k$. If an augmenting set of $G_2$ only contains edges from $x$ to vertices in $U_1$ we will call it \emph{proper.} We can extract a dominating set of $U_1$ (and thus of $G_1$) from a proper diameter-2 augmenting set $S$ of $G_2$ simply by taking all the vertices of $U_1$ that are adjacent to $x$ in $S$.

Say that $S$ is a solution set of edges from {\sc Diameter-2 Augmentation} on input $G_2$. We will show how to construct a proper augmenting set from $S$ of at most the same size as $S$. For any vertex $w \in U_1 \cup U_2$, there must be a 2-path (or less) joining $x$ to $w$. If such a 2-path ever passing through the vertex $z$, we can remove the $\{z,w\}$ edge from $S$ and add $\{x,w\}$ to $S$ instead. Note that such an edge-swap can never increase the diameter of the graph. We will provide a sequence of edge-swapping rules to the set $S$ until we arrive at a proper augmenting set.

\begin{edgerule}
If $S$ has an edge $\{z,w\}$ for any $w \in G_2$ then remove $\{z,w\}$ and add $\{x,w\}$.
\end{edgerule}

To describe the rest of the rules, we partition $U_1 \cup U_2$ into the following sets:

\begin{itemize}
\item[i)] $U_x$ = vertices $u$ in $U_1 \cup U_2$ such that $\{x,u\} \in S$
\item[ii)] $U^-$ = vertices $u$ in $U_1 \cup U_2$ that are not in $U_x$ and there is an edge $\{x,y(u,w)\} \in S$
\item[iii)] $U^+$ = vertices in $U_1 \cup U_2$ that are not in $U_x \cup U^-$
\end{itemize}

Clearly, these three sets are disjoint from each other and their union is exactly $U_1 \cup U_2$. To arrive at a proper augmenting set, the edges of $S$ joining vertex $x$ to the set $Y$ will have to be removed. It should be easy to verify that each of the following rules will not increase the diameter of $H$.

\begin{edgerule}
  If $S$ has an edge $\{x,y(a,b)\}$ with $a$ adjacent to $b$ then remove $\{x,y(a,b)\}$ and add the edge $\{x,a\}$.
\end{edgerule}

\begin{edgerule}
  If $S$ has an edge $\{x,y(a,b)\}$ with $a$ in $U_x$ then remove $\{x,y(a,b)\}$ and add the edge $\{x,b\}$.
\end{edgerule}

\begin{edgerule}
  If $S$ has edge $\{x,y(a,b)\}$ and $a$ is adjacent to some $c$ in $U_x$ then remove $\{x,y(a,b)\}$ and add the edge $\{x,b\}$.
\end{edgerule}
 
\begin{edgerule}
  If $S$ has two edges $\{x,y(a,b)\}$ and $\{x,y(b,c)\}$ then remove both of them and add the edges $\{x,y(a,c)\}$ and $\{x,b\}$.
\end{edgerule}

\begin{edgerule}
  If $S$ has two edges $\{x,y(a,b)\}$ and $\{x,y(c,d)\}$ such that $a$ is adjacent to $c$ in $G_2$ then remove $\{x,y(a,b)\}$ and $\{x,y(c,d)\}$ and add $\{x,a\}$ and $\{x,a(b,d)\}$.
\end{edgerule}

After applying Rules 3-6 we may have to return to Rule 2 and repeat this process, if any such edges would exist. Each rule reduces the number of edges from $x$ to the $Y$ set, so this process must indeed terminate.

Once we arrive at a point where none of the above rules can be applied any further, we make the following observations:

\begin{proposition}
  The set $U^-$ is empty.
\end{proposition}

\begin{proof}
  If any edge exists in $U^-$ then Rule 6 could be applied, so we have that $U^-$ is a stable set. If any edge existed from $U^-$ to $U_x$ then this would imply Rule 4 could be applied. Now consider any vertex $u$ in $U^-$: it must have an adjacent twin vertex, call it $u^t$, and it must be in $U^+$. Every vertex in $U^+$ must have a 2-path to $x$, but $U^+$ are the vertices which are not adjacent to any vertex in $Y$, and so every $U^+$ must be adjacent to one neighbour of $x$ in $U_x$. Now if $u^t$ is adjacent to some $a \in U_x$ then so is $u$, which violates Rule 4. Hence no such $u$ can exist, so $U^-$ is empty once these rules can no longer be applied.
\end{proof}

Proposition 1 tells us that all edges in the augmenting set $S$ must be from $x$ to $U_x$. We introduce one last rule to make this augmenting set proper:

\begin{edgerule}
  If $S$ has any $\{x,u\}$ edge where $u \in U_2$ then let $u^t$ be the twin of $u$ and remove $\{x,u\}$ and add the edge $\{x,u^t\}$.
\end{edgerule}

Now with a proper augmenting set, we can extract a dominating set of size at most $k$ in $U_1$. In the above notation, this is exactly the set $U_x$ when there are no more edge-swap rules that can be applied.

\end{proof}

\subsection{The Diameter-Improvement Problem}

Consider the following problem, which asks if the diameter of a graph can be improved (i.e. lowered):

\begin{problem}{\sc Diameter Improvement}
\\{\sc Input:} A graph $G=(V,E)$ and a positive integer $k$.
\\{\sc Task:} To determine if there exists a set of at most $k$ edges that can be added to $G$ so that the resulting graph has a smaller diameter than G.
\end{problem}

As previously noted, the graph resulting from the reduction from {\sc Dominating Set} to {\sc Diameter-2 Augmentation} had diameter 3 from its construction. Finding an augmenting edge set that improves this graph to diameter 2 will in fact solve the dominating set problem on the original (pre-reduction) graph. This provides a proof that {\sc Diameter Improvement} is itself W[2]-hard (and NP-complete,) even when restricted to input graphs of diameter 3.

\section{Concluding Remarks}

We gave a reduction to {\sc Diameter-2 Augmentation} directly from {\sc Dominating Set} which establishes the fixed-parameter hardness of {\sc Diameter-2 Augmentation} with respect to the augmenting set size. This also provides a proof of NP-completeness for {\sc Diameter-2 Augmentation} which reduced directly from a known and standard NP-complete problem. We identified the {\sc Diameter Improvement} and noted that it is fixed-parameter hard. Future considerations include finding exact exponential-time algorithms that are faster than brute-force searching for {\sc Diameter-2 Augmentation}, as well as the classification of subclasses of graphs for which {\sc Diameter-2 Augmentation} or {\sc Diameter Improvement} can be solved in polynomial time.

\end{document}